\newtheorem{theorem}{Theorem}
\numberwithin{theorem}{section}
\newtheorem{definition}[theorem]{Definition}
\newtheorem{lemma}[theorem]{Lemma}
\let\oldexample\example
\renewcommand{\example}{\oldexample\normalfont}
\let\oldatomicexample\atomicexample
\renewcommand{\atomicexample}{\oldatomicexample\normalfont}
\title{\boldmath Uniqueness of Complementary Recovery in Holographic Error-Correcting Codes}
\author[a,b]{Julia Jones}
\author[a,c]{and Jason Pollack}
\affiliation[a]{Institute for Quantum \& Information Sciences, Syracuse University, NY 13210, USA}
\affiliation[b]{Department of Physics, Syracuse University, NY 13210, USA}
\affiliation[c]{Department of Electrical Engineering and Computer Science, Syracuse University, NY 13210, USA}
\emailAdd{jjone112@syr.edu}
\emailAdd{japollac@syr.edu}
\newcommand{\M}{\mathcal{M}}
\newcommand{\E}{\mathcal{E}}
\renewcommand{\H}{\mathcal{H}}
\abstract{
Holographic codes are a type of error-correcting code with extra geometric structure ensured by a ``complementary recovery'' property: given a division of the physical Hilbert space $\H$ into $\H_A$ and $\H_{\bar A}$, and an algebra of physical operators $\M\subseteq (\mathcal{L}(\H_A)\otimes I_{\H_{\bar A}})$, the logical operators in $\mathcal{L}(\H_L)\simeq \mathcal{L}(P\H)$ which can be created by acting in $\M$ are identical to the logical operators whose expectation values cannot be altered by acting in the commutant $\M^\prime$, and vice versa.
In \cite{Pollack_2022}, a uniqueness theorem was stated: the only possible tuple of (code, bipartition, algebra) which can exhibit complementary recovery is the maximal one $\M=P(\mathcal{L}(\H_A)\otimes I_{\H_{\bar A}})P$. We point out a counterexample to this result, using a ``non-adjacent'' bipartition of a four-qubit code proposed in \cite{Pollack_2022}. We show that the failure of uniqueness is due to a failure to enforce error correction against erasure of $\H_{\bar A}$, which requires enforcing the algebraic Knill-Laflamme condition $[P E_i^\dagger E_j P,\M]=0$ for each pair of error operators. When we add the additional requirement that $\M$ be correctable with respect to this channel, uniqueness is restored, and we re-prove the theorem of \cite{Pollack_2022} with this added assumption. We present the list of bipartitions of the ``atomic'' holographic codes in \cite{Pollack_2022} in which the correctability assumption can be violated.
}
\begin{document} 
\maketitle

\section{Introduction}

The AdS/CFT correspondence \cite{maldacena1999large,Gubser:1998bc,Witten:1998qj,Aharony:1999ti} is a proposed theory of holographic quantum gravity. Researchers have made a connection between the holographic properties of this correspondence and quantum error correction \cite{verlinde2013black, almheiri2015bulk, mintun2015bulk}. In particular, Harlow \cite{Harlow_2017} identified the criteria that makes a quantum error-correcting code holographic using the operator-algebra error correction language established in \cite{beny2007generalization,B_ny_2007}. 

The paper that our work follows from, \cite{Pollack_2022}, built on the understanding of holographic properties in terms of quantum error correction and gave multiple examples of simple explicit holographic codes. The paper stated a theorem (\cite{Pollack_2022}, Theorem 4.4): if the set of logical operations on a code reachable by acting on a factor of the physical Hilbert space is a von Neumann algebra, then the code is holographic and the algebra is the unique von Neumann algebra satisfying complementary recovery for this choice of code and bipartition. 

While calculating the von Neumann algebras for the example codes given in \cite{Pollack_2022}, we found a contradiction to this theorem. The purpose of this paper is to explain the contradiction and prove a new theorem which is not contradicted. We show that a lemma (\cite{Pollack_2022}, Lemma 4.3) used to prove this theorem is contradicted, and the contradiction ultimately traces back to the fact that the definitions given in \cite{Pollack_2022} for a von Neumann algebra being correctable and private with respect to a code are too weak. However, we re-prove the theorem under the additional assumption that the code satisfies a more standard notion of correctability.

\paragraph{Overview of Results} We'll now describe our results more technically in the context of operator-algebraic quantum error correction (AQEC) \cite{beny2007generalization,B_ny_2007,kribs2019quantumcomplementarityoperatorstructures}. In AQEC, we work with algebras of operators that act on the logical Hilbert space $\H_L$ and are protected from errors on the physical Hilbert space $\H$. The logical information in $\H_L$ is encoded into $\H$ by an isometry $V: \H_L \rightarrow \H$ and the physical Hilbert space is factorized into two subregions, $\H = \H_A \otimes \H_{\bar A}$ where $\bar A$ is the subregion on which the error has occurred. For our codes, we study the error channel that erases $\bar A$, i.e. takes a state on $\H_A \otimes \H_{\bar A}$ to $\rho_A \otimes I_{\bar A}/|\H_{\bar A}|$. 

The operators on $\H_L$ that are \emph{correctable} from the subregion $A$ are ones that can be ``seen'' given access only to $A$, and thus are protected from the erasure of $\bar A$. Operators on $\H_L$ that are \emph{private} from $A$ are ones that cannot be seen given only access to $A$. These ideas are made more precise by \cite{kribs2019quantumcomplementarityoperatorstructures} where the error on $\bar A$ is given by the action of a quantum channel $\E$. The definitions of correctable and private algebras given in \cite{Pollack_2022} were a generalization of the original definitions in terms of $\E$. We identify the error in this generalization and use the definitions of \cite{kribs2019quantumcomplementarityoperatorstructures} to modify and re-prove the existence and uniqueness theorem for codes with complementary recovery, subject to an additional correctability condition:

\begin{theorem}
  Say $P$ is a projection and $A$ is a subregion. Let $\M:=P(\mathcal{L}(\H_A) \otimes I_{\bar A})P$ be the image of operators projected onto $P\H$. If $\M$ is a von Neumann algebra (in particular, it is closed under multiplication) and is correctable from $\E$,
then it is the unique von Neumann algebra satisfying complementary recovery with $P$ and $A$. If it is not, then no von Neumann algebra satisfying complementary recovery exists.
\end{theorem}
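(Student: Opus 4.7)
The plan is to reduce the theorem to two symmetric halves — an existence half (the specified $\M$ does enjoy complementary recovery whenever it is correctable) and a uniqueness half (no other von Neumann algebra can) — and to pin the entire argument on the algebraic Knill--Laflamme condition $[P E_i^\dagger E_j P,\M]=0$ for the Kraus operators $\{E_i\}$ of the erasure channel $\E$ that takes $\H_{\bar A}$ to the maximally mixed state. Before starting, I would note the easy containment: because $\M := P(\mathcal{L}(\H_A)\otimes I_{\bar A})P$ is by construction the entire image under projection of $A$-localized physical operators, any candidate $\tilde{\M}$ that satisfies the ``creatable by $A$'' side of complementary recovery must already sit inside $\M$. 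The whole question therefore reduces to: when is $\M$ itself a legitimate complementary-recovery algebra, and when can a proper vN subalgebra $\tilde{\M}\subsetneq\M$ still satisfy the definition?

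For the existence half I would invoke the standard AQEC dictionary from \cite{kribs2019quantumcomplementarityoperatorstructures}: the algebraic Knill--Laflamme condition is exactly equivalent to $\M$ being fixed (as a set of code-space operators) by $\E^\dagger$, which in our setting says the expectation value of every $m\in\M$ in a code state $P\psi$ is invariant under acting on $\H_{\bar A}$. This is one of the two halves of complementary recovery. The other half — that $\M'$ is creatable by $\bar A$ and invariant under $A$ — follows by applying the same argument to the complementary channel $\E^c$ (erasure of $A$) together with the fact that $\M$ being a vN algebra gives $\M'' = \M$, so commutant-hood is symmetric; operator-algebraic complementarity \cite{kribs2019quantumcomplementarityoperatorstructures} then packages these two halves into full complementary recovery.

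For uniqueness, I would suppose $\tilde{\M}$ is a vN algebra with complementary recovery and already know $\tilde{\M}\subseteq\M$. To push equality, I would take $m\in\M$, so $m=PaP$ for some $a\in\mathcal{L}(\H_A)\otimes I_{\bar A}$; since $\tilde{\M}$ has the ``creatable by $A$'' property, $m$ must lie in $\tilde{\M}$ unless the definition of $\tilde{\M}$ imposes a strictly sharper ``invariant under $\bar A$'' condition than $\M$ already satisfies — but correctability of $\M$ is precisely what rules this out, since it makes the invariant-under-$\bar A$ side agree with the full creatable-by-$A$ side. For the converse failure direction, if $\M$ is not a vN algebra the hypothesis of complementary recovery cannot even be formulated for it, and any proper subalgebra $\tilde{\M}\subsetneq\M$ would omit some $PaP$, violating the creatable-by-$A$ side; if $\M$ is a vN algebra but not correctable, the algebraic Knill--Laflamme condition fails for some pair $E_i,E_j$, exhibiting an element of $\M$ whose expectation value is shifted by $\bar A$-action, so $\M$ itself cannot satisfy complementary recovery, and by the same maximality argument no smaller algebra can either.

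The main obstacle is the uniqueness step: the counterexample of \cite{Pollack_2022} shows that the naive weaker notion of complementary recovery used there admits proper subalgebras, so I must be careful to work with the strengthened definition in which both the ``creatable'' and ``invariant'' sides are required to coincide with the full image $P(\mathcal{L}(\H_A)\otimes I_{\bar A})P$, and to show that it is exactly this coincidence that correctability enforces. Tracking which half of complementary recovery breaks in the four-qubit non-adjacent bipartition, and verifying that correctability closes that gap while leaving the maximality argument intact, is where the real content of the theorem lives; the rest is bookkeeping with commutants and Kraus operators.
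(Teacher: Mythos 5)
Your proposal has the right high-level ingredients (the algebraic Knill--Laflamme condition, the correctable $\leftrightarrow$ private duality of \cite{kribs2019quantumcomplementarityoperatorstructures}, the bicommutant, and the maximality of $\M$), but two of its load-bearing steps are gaps. First, in your existence half you identify ``the complementary channel $\E^C$'' with ``erasure of $A$,'' i.e.\ with $\bar\E$. That identification is precisely the error in \cite{Pollack_2022} that this theorem is written to repair: $\E^C$ maps into the environment $\H_C$ and is \emph{not} the erasure of $A$. The step you are eliding is the actual computation the proof turns on: writing the Kraus operators of $\bar\E$ as $\bar E_{ij}\propto \ket{i}\bra{j}_A\otimes I_{\bar A}$, one finds $P(\bar\E^{C})^\dagger(\mathcal{L}(\H_C))P = P(\mathcal{L}(\H_A)\otimes I_{\bar A})P$, so that privacy of $\M'$ for $\bar\E^{C}$ is literally the statement that $P(\mathcal{L}(\H_A)\otimes I_{\bar A})P$ commutes with $\M'$ --- true by definition of $\M$ --- and only then does Proposition 2.4 of \cite{kribs2019quantumcomplementarityoperatorstructures} upgrade this to correctability of $\M'$ for $\bar\E$. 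Without that computation, ``operator-algebraic complementarity packages these two halves'' is exactly the non sequitur that produced the original false lemma.

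Second, your uniqueness argument does not engage the mechanism that actually excludes proper subalgebras. You argue that any $m=PaP$ ``must lie in $\tilde\M$'' because of the creatable-by-$A$ side, but the first condition of complementary recovery is a commutator condition $[PE_k^\dagger E_lP,\tilde\M]=0$ (or, in the old language, a one-sided containment $\tilde\M\subseteq\M$); neither forces $\M\subseteq\tilde\M$. Indeed, in the paper's own counterexample $\mathcal{N}=\mathrm{span}\{Z,I\}\subsetneq\M_2$ \emph{does} satisfy the first condition; what kills it is the second. The proof must run through the commutant: complementary recovery for $\mathcal{N}$ makes $\mathcal{N}'$ correctable for $\bar\E$, hence private for $\bar\E^{C}$, hence $P(\mathcal{L}(\H_A)\otimes I_{\bar A})P\subseteq(\mathcal{N}')'=\mathcal{N}$ by the bicommutant theorem, contradicting $\mathcal{N}\subsetneq\M$. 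Your phrase ``correctability of $\M$ is precisely what rules this out'' asserts the conclusion rather than deriving it, and the same omission undermines your non-existence direction. As written, the proposal would not survive the very counterexample the theorem was modified to handle.
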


This theorem relies on a strengthened definition of complementary recovery compared to \cite{Pollack_2022}, which we'll argue contains the correct criteria:

\begin{definition}
    
  If $P$ is a projection, $A$ is a subregion inducing the factorization $\mathcal{H} = \mathcal{H}_A \otimes\mathcal{H}_{\bar{A}} $, and $\M$ is a von Neumann algebra, then $(P, A, \M)$ exhibit complementary recovery if 
  
\begin{itemize}

    \item $\M \text{ is correctable from } \mathcal{E} \text{ with respect to } P$ where $\E: \mathcal{L(H)} \rightarrow \mathcal{L(H)}$ is the channel that erases $\bar{A}$ and tensors on $I_{\bar{A}}$,

       \item $\M' \text{ is correctable from } \bar{\mathcal{E}} \text{ with respect to } P$ where $\bar{\E}: \mathcal{L(H)} \rightarrow \mathcal{L(H)}$ is the channel that erases $A$ and tensors on $I_{A}$.
    
\end{itemize}
\end{definition}

\paragraph{Organization of the Paper} This paper is organized as follows. In Section~\ref{sec:background}, we give a brief background on von Neumann algebras and identify some of the theorems that will be used later on. We will then introduce the original operator-algebra error correction definitions and define our specific erasure channel $\E$. In Section~\ref{sec:cr}, we start from the incorrect definitions used in \cite{Pollack_2022} and exhibit the contradictions they give rise to. We then demonstrate why we need a better understanding of the erasure channel $\E$ to resolve the contradiction. We present and re-prove the existence and uniqueness theorem. Finally, in Section~\ref{sec:examples}, we give the von Neumann algebras for other correctable example codes in \cite{Pollack_2022} and their different bipartitions. We conclude with a discussion in Section \ref{sec:discussion}. Mathematica code we used to check complementary recovery and compute properties of logical algebras can be found on Github: https://github.com/juliaj25/CR-for-Holographic-Codes.git.

\section{Background}
\label{sec:background}

For holographic codes, the algebra of operators that act on the logical Hilbert space are specifically von Neumann algebras. In this section we will give a brief overview of some of the important properties of von Neumann algebras. Additionally, to understand the errors that occur on the physical Hilbert space we will define a general quantum channel and then specify the quantum channel that erases $\bar A$ for our codes. Finally, we give the definitions of correctable and private algebras for a error correcting code from \cite{kribs2019quantumcomplementarityoperatorstructures}.

\subsection{Von Neumann algebras}

Throughout this subsection we will repeat the definitions of \cite{Pollack_2022}. More comprehensive reviews can be found, for example, in the Appendix of \cite{Harlow_2017} or the lecture notes \cite{jones2003neumann}.

A von Neumann algebra is an algebra of linear operators acting over a complex Hilbert space which has an identity element and is closed under the adjoint operation. More precisely,

\begin{definition}[von Neumann algebra]
Let $\mathcal{L}(\mathcal{H})$ be the set of linear operators over a finite-dimensional complex Hilbert space $\mathcal{H}$. A von Neumann algebra is a subset $\mathcal{M} \subseteq \mathcal{L}(\mathcal{H})$ which is closed under:
\begin{itemize}
    \item (addition) if $A, \, B \in \mathcal{M}$ then $A + B \in \mathcal{M}$;
    \item (multiplication) if $A, \, B \in \mathcal{M}$ then $AB \in \mathcal{M}$; 
    \item (scalar multiplication) if $A \in \mathcal{M}$  and $c \in \mathbb{C}$ then $cA \in \mathcal{M}$; 
    \item (complex conjugation) if $A \in \mathcal{M}$ then $A^\dagger \in \mathcal{M}$;
\end{itemize}
and for which there exists an element $I \in \mathcal{M}$ such that for every $A \in \mathcal{M}$ we have $IA = A$.
\end{definition}
In this paper we restrict ourselves to the finite-dimensional case, in which von Neumann algebras can be thought of as algebras of matrices. 

Another important set of operators is the set that contains all operators that commute with every element of $\M$. We call this the commutant.

\begin{definition}[commutant] Given a von Neumann algebra $\mathcal{M} \subseteq \mathcal{L}(\mathcal{H})$ the commutant is the set
\begin{equation}
    \mathcal{M}^{\prime} \equiv \left \{ B \in \mathcal{L}(\mathcal{H}) \mid \forall A \in \mathcal{M}: AB = BA \right \}.
\end{equation}
\end{definition}

We will keep the primed notation to indicate the commutant throughout this paper. The set of operators that commute with the commutant is the original algebra.

\begin{theorem}[bicommutant theorem]
For every von Neumann algebra $\mathcal{M}\subseteq \mathcal{L}(\mathcal{H})$ we have that
\begin{equation}
    \mathcal{M}^{\prime \prime} \equiv (\mathcal{M}^{\prime})^\prime = \mathcal{M}.
\end{equation}
\end{theorem}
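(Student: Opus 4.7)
The inclusion $\M \subseteq \M''$ is the easy half: by definition every $A \in \M$ satisfies $AB = BA$ for every $B \in \M'$, so $A$ belongs to the commutant of $\M'$, namely $\M''$. The substance of the theorem is the reverse inclusion $\M'' \subseteq \M$, and this is where the topological content of the definition of a von Neumann algebra enters. A von Neumann algebra is not just a unital $*$-algebra of bounded operators on $\H$ but one that is closed in the strong (equivalently, weak) operator topology; the bicommutant theorem says precisely that this topological closure condition is equivalent to being the commutant of one's commutant. In the finite-dimensional setting emphasized just after the definition, the operator topologies collapse to the norm topology and every unital $*$-subalgebra of $\mathcal{L}(\H)$ is automatically closed, but the argument I outline below works uniformly in any dimension.

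\textbf{Single-vector approximation.} Fix $T \in \M''$. The goal is to show that $T$ lies in the SOT-closure of $\M$, i.e., that for any finite family of vectors $\xi_1,\dots,\xi_n \in \H$ and any $\epsilon > 0$ there is some $A \in \M$ with $\|(T-A)\xi_i\| < \epsilon$ for all $i$. The key lemma handles $n=1$: for $\xi \in \H$, set $K := \overline{\M\xi}$ and let $P$ be the orthogonal projection onto $K$. Because $\M$ is a unital $*$-subalgebra, $K$ is invariant under $\M$, and $*$-closure of $\M$ forces $K^\perp$ to be invariant as well, so $P \in \M'$. Since $T \in \M''$, $T$ commutes with $P$; and since $\xi = I\xi \in K$, one gets $T\xi = T P\xi = P T\xi \in K$. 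Thus $T\xi$ lies in the closure of $\M\xi$, and there exists $A \in \M$ with $\|(T-A)\xi\| < \epsilon$.

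\textbf{Matrix amplification.} To upgrade to $n$ vectors simultaneously I would pass to the $n$-fold amplification $\H^{(n)} := \H \oplus \cdots \oplus \H$ with the diagonally acting algebra $\M^{(n)} := \{\mathrm{diag}(A,\dots,A) : A \in \M\}$. A direct block-matrix calculation shows $(\M^{(n)})' = M_n(\M')$, and dualizing gives $(\M^{(n)})'' = \{\mathrm{diag}(T,\dots,T) : T \in \M''\}$. Applying the single-vector result to $\mathrm{diag}(T,\dots,T) \in (\M^{(n)})''$ with the vector $(\xi_1,\dots,\xi_n) \in \H^{(n)}$ produces one element of $\M^{(n)}$, hence a single $A \in \M$, that approximates $T$ simultaneously on all $\xi_i$. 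Because $\M$ is SOT-closed, the resulting net of approximants converges into $\M$, so $T \in \M$.

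\textbf{Main obstacle.} The conceptually delicate point is the identification $(\M^{(n)})' = M_n(\M')$: one writes a commuting operator in block form $[B_{ij}]$, deduces from commutation with $\mathrm{diag}(A,\dots,A)$ for every $A \in \M$ that $B_{ij}A = AB_{ij}$, and hence that each $B_{ij} \in \M'$. The cyclic-subspace lemma itself is clean, but it is essential that $\M$ is a $*$-algebra (to get invariance of $K^\perp$, hence $P \in \M'$) and unital (so that $\xi \in K$ and the trick $T\xi = PT\xi$ lands $T\xi$ in $K$); dropping either hypothesis breaks the argument. Thus each of the four axioms in the definition of a von Neumann algebra — algebraic closure, $*$-closure, unitality, and topological closure — is used precisely once in pushing the approximation through to $\M'' \subseteq \M$.
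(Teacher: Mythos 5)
Your argument is the standard von Neumann density/bicommutant proof and it is correct; the paper itself states this theorem as background without proof, so there is nothing to compare against on the paper's side. Each step checks out: the trivial inclusion $\M\subseteq\M''$, the cyclic-subspace lemma (where $*$-closure gives invariance of $K^\perp$ and hence $P\in\M'$, and unitality puts $\xi$ in $K$ so that $T\xi=PT\xi\in\overline{\M\xi}$), the amplification identity $(\M^{(n)})'=M_n(\M')$ with its dual $(\M^{(n)})''=\{\mathrm{diag}(T,\dots,T):T\in\M''\}$, and the final appeal to SOT-closedness. One contextual remark: the paper's Definition of a von Neumann algebra lists only the algebraic axioms (unital $*$-subalgebra) and no topological closure, which would make the theorem false in infinite dimensions; your observation that the paper restricts to finite-dimensional $\H$, where every linear subspace of $\mathcal{L}(\H)$ is automatically closed in all the relevant topologies, is exactly the point that reconciles the paper's weaker definition with the theorem as stated, and is worth making explicit as you did.
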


The Wedderburn decomposition of a von Neumann algebra is given by the classification theorem. 

\begin{theorem}[classification theorem]
For every von Neumann algebra $\mathcal{M}$ on a finite-dimensional Hilbert space $\mathcal{H}$ there exists a block decomposition of the Hilbert space,
\begin{equation}
    \mathcal{H} = \left[ \oplus_{\alpha}  \left( \mathcal{H}_{A_\alpha} \otimes \mathcal{H}_{\bar{A}_\alpha}  \right) \right] \oplus \mathcal{H}_0,
\end{equation}
such that
\begin{align}
    \label{eq:Wedderburn} 
    \mathcal{M} = \left[ \oplus_{\alpha}  \left( \mathcal{L}(\mathcal{H}_{A_\alpha}) \otimes I_{\bar{A}_\alpha}  \right) \right] \oplus 0, \\
    \mathcal{M}^{\prime} = \left[  \oplus_{\alpha}  \left( I_{A_\alpha} \otimes \mathcal{L}(\mathcal{H}_{\bar{A}_\alpha} ) \right) \right] \oplus 0,
\end{align}
where $\mathcal{H}_0$ is the null space and $0$ is the zero operator on $\mathcal{H}_0$. 
\end{theorem}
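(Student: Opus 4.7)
The plan is a three-step reduction. First I would dispose of the null space $\mathcal{H}_0$; second, I would use the center $Z(\mathcal{M}) := \mathcal{M} \cap \mathcal{M}'$ to chop the remainder of $\mathcal{H}$ into blocks on each of which $\mathcal{M}$ acts as a factor; and third, I would prove the structure theorem for a factor on a finite-dimensional Hilbert space by constructing a system of matrix units. The commutant formula then falls out of the explicit form of $\mathcal{M}$ together with the bicommutant theorem already stated.

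The first two steps are mostly bookkeeping. Let $I_\mathcal{M} \in \mathcal{M}$ be the identity element; it is a projection in $\mathcal{L}(\mathcal{H})$, and setting $\mathcal{H}_0 := (I - I_\mathcal{M})\mathcal{H}$ isolates the subspace on which $\mathcal{M}$ acts as zero, accounting for the $\oplus 0$ summand. On $I_\mathcal{M}\mathcal{H}$ I may then assume $\mathcal{M}$ acts unitally. Next, $Z(\mathcal{M})$ is a commutative unital $*$-subalgebra of $\mathcal{L}(I_\mathcal{M}\mathcal{H})$, so in finite dimensions it is the linear span of a family of mutually orthogonal minimal projections $\{p_\alpha\}$ summing to $I_\mathcal{M}$. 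Each $p_\alpha$, being central, commutes with all of $\mathcal{M}$ and all of $\mathcal{M}'$, so its range $\mathcal{K}_\alpha := p_\alpha \mathcal{H}$ is invariant under both algebras; the restriction $\mathcal{M}_\alpha := p_\alpha \mathcal{M}$ has center $\mathbb{C}\, p_\alpha$ by minimality of $p_\alpha$, hence is a factor. Thus $\mathcal{M} = \oplus_\alpha \mathcal{M}_\alpha$ and it remains to analyze each factor separately.

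The main work, and the step I expect to be the obstacle, is showing that any factor $\mathcal{N} \subseteq \mathcal{L}(\mathcal{K})$ on a finite-dimensional Hilbert space $\mathcal{K}$ has the form $\mathcal{L}(\mathcal{H}_A) \otimes I_{\bar A}$ for some factorization $\mathcal{K} \simeq \mathcal{H}_A \otimes \mathcal{H}_{\bar A}$. The key lemma is that any two nonzero minimal projections $p, q \in \mathcal{N}$ are connected by a partial isometry in $\mathcal{N}$: some $n \in \mathcal{N}$ with $pnq \neq 0$ must exist, else the two-sided ideal generated by $p$ would be proper, contradicting simplicity of a finite-dimensional factor; polar-decomposing $pnq$ produces a partial isometry in $\mathcal{N}$ whose source and range are subprojections of $q$ and $p$ respectively, and minimality of $p$ and $q$ forces these subprojections to equal $q$ and $p$. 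I would then pick a minimal $e_{11} \in \mathcal{N}$, iterate this lemma to produce mutually orthogonal minimal projections $e_{11}, e_{22}, \ldots, e_{dd}$ summing to $I_\mathcal{K}$ together with partial isometries $e_{i1} \in \mathcal{N}$ from $e_{11}$ to $e_{ii}$, and set $e_{ij} := e_{i1} e_{1j}^\dagger$. Choosing any basis $\{f_a\}$ of $\mathcal{H}_{\bar A} := e_{11}\mathcal{K}$, the vectors $\{e_{i1} f_a\}$ form an orthonormal basis of $\mathcal{K}$, exhibiting the tensor decomposition $\mathcal{K} \simeq \mathcal{H}_A \otimes \mathcal{H}_{\bar A}$ with $\dim \mathcal{H}_A = d$ and realizing $e_{ij}$ as $\ket{i}\bra{j} \otimes I_{\bar A}$. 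Minimality of $e_{11}$ gives $e_{11}\mathcal{N} e_{11} = \mathbb{C} e_{11}$, so every $n \in \mathcal{N}$ is determined by the scalars $e_{1i} n e_{j1} \in \mathbb{C} e_{11}$, pinning down $\mathcal{N} = \mathcal{L}(\mathcal{H}_A) \otimes I_{\bar A}$. Reassembling over $\alpha$, together with the routine check that $(\mathcal{L}(\mathcal{H}_A) \otimes I_{\bar A})' = I_A \otimes \mathcal{L}(\mathcal{H}_{\bar A})$ block by block, yields the Wedderburn decomposition on both lines.
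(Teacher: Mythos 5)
The paper does not actually prove this statement: it is quoted as background, with the proof deferred to the appendix of \cite{Harlow_2017} and the lecture notes \cite{jones2003neumann}. Your proposal is essentially the standard argument found in those references---strip off the null space using the unit of $\mathcal{M}$, cut along the minimal projections of the center to reduce to factors, and exhibit each finite-dimensional factor as a full matrix algebra tensored with an identity by building a system of matrix units from a family of mutually equivalent minimal projections---and the argument is sound. Three small points are worth flagging. First, your key lemma leans on ``simplicity of a finite-dimensional factor,'' which is itself a fact requiring proof (e.g.\ via the observation that every two-sided ideal of a finite-dimensional $*$-algebra of Hilbert-space operators is generated by a central projection); a more self-contained route is to note that the projection onto $\overline{\mathcal{N}q\mathcal{K}}$ lies in the center of $\mathcal{N}$ and dominates $q$, hence equals $I_{\mathcal{K}}$ in a factor, which immediately yields some $n$ with $pnq\neq 0$. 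Second, $e_{ij}:=e_{i1}e_{1j}^{\dagger}$ should read $e_{ij}:=e_{i1}e_{j1}^{\dagger}$ (as written the product vanishes for $j\neq 1$). Third, your ``routine check'' of the commutant actually exposes a defect in the statement as printed: when $\mathcal{H}_0\neq 0$, any operator supported on the null block commutes with an algebra that acts as zero there, so $\mathcal{M}'$ contains all of $\mathcal{L}(\mathcal{H}_0)$ and the second line of the decomposition should end in $\oplus\,\mathcal{L}(\mathcal{H}_0)$ rather than $\oplus\,0$; the formula as written (and the bicommutant theorem as stated) holds only when $\mathcal{M}$ contains the identity operator of $\mathcal{H}$, i.e.\ $\mathcal{H}_0=0$.
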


For simplicity, we will no longer include the null space (zero operator) when we write the decomposition. The structure given by the classification theorem is what allows us to explicitly calculate von Neumann algebras for various codes. Since $\H=\H_A \otimes \H_{\bar A}$, the von Neumann algebra on the logical Hilbert space, $\H_L$, is  $\M=V^\dagger(\mathcal{L}(\H_A) \otimes I_{\bar A})V$ where $V$ is the isometry $V:\H_L \rightarrow \H$.

\subsection{Operator-algebraic quantum error correction}
 We briefly review the operator-algebraic generalization of quantum error correction of \cite{beny2007generalization,B_ny_2007,kribs2019quantumcomplementarityoperatorstructures}. To discuss correctability and privacy of an algebra for a code we need the notions of a quantum channel and its complementary channel.
 
\begin{definition}
    On a Hilbert space $\H$, a quantum channel is a completely positive trace-preserving map $\mathcal{E} : \mathcal{L}(\mathcal{H}) \rightarrow \mathcal{L}(\mathcal{H}) $. The Stinespring dilation theorem \cite{Stinespring1955} tells us there is an ``environment'' Hilbert space $\H_C$ (with $|\H_C| \leq |\H|^2$), a state $\ket{\psi_C} \in \H_C$, and a unitary $U$ on $\H \otimes \H_C$ such that, $\forall \rho \in \mathcal{L}(\H)$,
    \begin{align}
        \E(\rho) = \text{Tr}_{\H_C}\circ U(\rho \otimes \ket{\psi_C}\bra{\psi_C})U^*=\text{Tr}_C\circ V\rho V^*,
    \end{align}
where $Tr_C$ is the partial trace from $\mathcal{L}(\H \otimes \H_C) \text{ to } \mathcal{L}(\H)$ and $V: \H \otimes \H_C \rightarrow \H$ is the isometry defined by $V\ket{\psi}=U(\ket{\psi \otimes \psi_C})$.
\end{definition}

The complementary channel $\E^C: \mathcal{L}(\H) \rightarrow \mathcal{L}(\H_C)$ is defined as 
\begin{align}
    \E^C(\rho) \equiv Tr_\H V\rho V^*= \sum_{k,l} \text{Tr}(\rho E_k^*E_l) \ket{k} \bra{l},
\end{align}
where $\ket{k}$ and  $\ket{l}$ are the basis states of the environment space $\H_C$ and $E_k,E_l$ are the Kraus operators of $\E$.

The specific quantum channel that our logical operators are protected against is the channel that traces out the information in $\bar{A}$ and then tensors on the maximally-mixed state in $\bar{A}$. This is a discarding channel followed by an appending channel, with Kraus operators
  \begin{align} 
  E_{ij}=\frac{1}{\sqrt{|\mathcal{H}_{\bar{A}}|}} 
  (I_A \otimes \ket{i}_{\bar{A}})(I_A \otimes \bra{j}_{\bar{A}}) = \frac{1}{\sqrt{|\mathcal{H}_{\bar{A}}|}} 
  (I_A \otimes \ket{i} \bra{j}_{\bar{A}}),
  \end{align}
  where $\ket{i}$ and $\ket{j}$ are basis states of $\mathcal{H}_{\bar{A}}$. We will also need the dual channel, $\mathcal{E}^\dagger$, defined as $\text{Tr}(\mathcal{E}(\rho)X)=\text{Tr}(\rho \mathcal{E}^\dagger (X))$. Since 
  \begin{align}
      \text{Tr}((\sum_{k}E_k\rho E_k^\dagger)X)=\text{Tr}(\sum_{k}\rho E_k^\dagger X E_k)=\text{Tr}(\rho \mathcal{E}^\dagger (X)),
  \end{align}
then the Kraus operators of $\mathcal{E}^\dagger$ are
  \begin{align} 
  E^\dagger_{ij}=\frac{1}{\sqrt{|\mathcal{H}_{\bar{A}}|}} 
  (I_A \otimes \ket{j}_{\bar{A}})(I_A \otimes \bra{i}_{\bar{A}}) = \frac{1}{\sqrt{|\mathcal{H}_{\bar{A}}|}} 
  (I_A \otimes \ket{j} \bra{i}_{\bar{A}}).
  \end{align}

We will now define a correctable and private algebra in terms of $\E$ \cite{kribs2019quantumcomplementarityoperatorstructures}:

  \begin{definition}\label{def:corr}
Let $\H$ be a Hilbert space and let $P = VV^*$ be a projection on $\H$. Given a channel $\mathcal{E} : \mathcal{L}(\mathcal{H}) \rightarrow \mathcal{L}(\mathcal{H}) $ with Kraus operators $\{E_k\}$, an algebra $\M \subseteq \mathcal{L}(\mathcal{PH}) $ is 

\begin{itemize}
    \item \textbf{correctable} for $\E$ with respect to $P$ if and only if 
\begin{align}
    [PE^\dagger_k E_lP, X]=0 \text{     }\forall X\in \M, \forall k,l.\label{eq:correct_correctable}
\end{align}
\item \textbf{private} for $\E$ with respect to P if 
\begin{align}
  P (\mathcal{E}^\dagger (\mathcal{L}(\mathcal{H}))P \subseteq \M'= \{ X\in \mathcal{L}(\mathcal{PH}) | [X,A]=0, \forall A\in \M \}.
\end{align}
\end{itemize}

\end{definition}
Here we've given the ``testable condition'' \eqref{eq:correct_correctable} for correctability (\cite{kribs2019quantumcomplementarityoperatorstructures}, Theorem 2.3), which phrases it in terms of a set of Kraus operators of the error channel rather than more abstractly in terms of the channel alone. This condition can be seen as an algebraic generalization of the Knill-Laflamme condition for standard error correction, which requires only that the $PE^\dagger_k E_lP$ commute with the algebra rather than being proportional to $P$.

 Note that the logical Hilbert space may be thought of as $\H_L$ or as the code subspace $P\H\subset\H$. These spaces are isomorphic to one another ($V \mathcal{H}_L \sim P \mathcal{H}$) and $V$ will be used to go back and forth between them.
\section{Complementary recovery}
\label{sec:cr}

We will now introduce the existence and uniqueness theorem presented in \cite{Pollack_2022} and demonstrate the contradiction. We must first introduce the $\textbf{incorrect}$ definitions used in \cite{Pollack_2022} for correctable and private von Neumann algebras :

\begin{definition}\label{def:wrong} [\cite{Pollack_2022}, Definition 4.1] Say $V : \H_L \to \H$ is an encoding isometry $V$ for some quantum error-correcting code, and $A$ is a subregion of $\H$ inducing the factorization $\H = \H_A \otimes \H_{\bar A}$. A von Neumann algebra $\M \subseteq \mathcal{L}(\H_L)$ is said to be:
    \begin{itemize}
        \item \textbf{correctable} from $A$ with respect to $V$ if $\M \subseteq V^\dagger (\mathcal{L}(\H_A)\otimes I_{\bar A})V$. That is: for every $O_L \in \M$ there exists an $O_A \in \mathcal{L}(\H_A)$ such that $O_L = V^\dagger (O_A \otimes I_{\bar A}) V$.
        \item \textbf{private} from $A$ with respect to $V$ if $V^\dagger (\mathcal{L}(\H_A)\otimes I_{\bar A})V \subseteq  \M'$. That is: for every $O_A \in \mathcal{L}(\H_A)$ it is the case that $ V^\dagger (O_A \otimes I_{\bar A}) V$ commutes with every operator in $\M$.
    \end{itemize}
\end{definition}

Then, the $\textbf{incorrect}$ lemma given in \cite{Pollack_2022}:

\begin{lemma} \label{lemma:corrpriv} [\cite{Pollack_2022}, Lemma 4.3] \textbf{Correctable from $A$ $\leftrightarrow$ private from $\bar A$.} A von Neumann algebra $\M$ is correctable from $A$ with respect to $V$ if and only if $\M$ is private from $\bar A$ with respect to $V$. 
\end{lemma}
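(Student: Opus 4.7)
The plan is to reduce both sides of the stated equivalence to inclusions among two ``image algebras'' on the code subspace. Set $\mathcal{N}_A := V^\dagger(\mathcal{L}(\H_A)\otimes I_{\bar A})V$ and $\mathcal{N}_{\bar A} := V^\dagger(I_A \otimes \mathcal{L}(\H_{\bar A}))V$, regarded as subsets of $\mathcal{L}(\H_L)$. Under Definition~\ref{def:wrong}, correctability of $\M$ from $A$ is exactly the statement $\M \subseteq \mathcal{N}_A$, while privacy of $\M$ from $\bar A$ is exactly $\mathcal{N}_{\bar A} \subseteq \M'$. If I can show that $\mathcal{N}_A$ and $\mathcal{N}_{\bar A}$ are mutual commutants inside $\mathcal{L}(P\H)$ --- i.e., $\mathcal{N}_A = \mathcal{N}_{\bar A}'$ --- the lemma will drop out immediately, since $\M \subseteq \mathcal{N}_{\bar A}'$ is tautologically the same as $\mathcal{N}_{\bar A} \subseteq \M'$.

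My first step would be the ``easy'' direction $\mathcal{N}_{\bar A} \subseteq \mathcal{N}_A'$. For generic elements $X = V^\dagger(O_A \otimes I_{\bar A})V$ and $Y = V^\dagger(I_A \otimes O_{\bar A})V$, using $VV^\dagger = P$ I would expand $XY = V^\dagger(O_A \otimes I_{\bar A})P(I_A \otimes O_{\bar A})V$ and $YX = V^\dagger(I_A \otimes O_{\bar A})P(O_A \otimes I_{\bar A})V$. On the unprojected physical Hilbert space, the two factored operators commute, so one hopes the projector does not disturb this equality. For the reverse inclusion $\mathcal{N}_A' \subseteq \mathcal{N}_{\bar A}$ I would apply the classification theorem to put $\mathcal{N}_A$ in Wedderburn block form and then argue, block by block, that any operator commuting with every projected $O_A \otimes I_{\bar A}$ must lift to something of the tensor form $I_A \otimes O_{\bar A}$ after compression by $V^\dagger$ and $V$.

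The main obstacle, which I expect is precisely the source of the contradiction flagged in the introduction, is the projector $P$ sitting between the two tensor factors in the products $XY$ and $YX$. Neither $O_A \otimes I_{\bar A}$ nor $I_A \otimes O_{\bar A}$ need commute with $P$, so the two orderings of the sandwiched product can genuinely disagree, and nothing in Definition~\ref{def:wrong} prevents this. As a result, the desired identification $\mathcal{N}_A = \mathcal{N}_{\bar A}'$ can fail, collapsing the whole bicommutant manipulation. Any repair seems to require an additional hypothesis forcing $P$ to behave compatibly with the bipartition --- naturally, the algebraic Knill--Laflamme condition $[P E_k^\dagger E_l P, X] = 0$ built into Definition~\ref{def:corr}, which is exactly the extra structure the rest of the paper installs.
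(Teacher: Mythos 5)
You cannot complete this proof, and the reason is not a gap in your argument: the statement is false, and the paper presents it only as the \emph{incorrect} Lemma~4.3 of \cite{Pollack_2022}. Rather than proving it, the paper refutes it with an explicit counterexample. For the non-adjacent bipartition $A_{1,3}$ of the four-qubit code, $\mathcal{N}_A = V^\dagger(\mathcal{L}(\H_A)\otimes I_{\bar A})V = \mathrm{span}\{X,Y,Z,I\}$ while $\mathcal{N}_{\bar A} = V^\dagger(I_A\otimes\mathcal{L}(\H_{\bar A}))V = \mathrm{span}\{Z,I\}$; taking $\M = \mathcal{N}_A$ gives a von Neumann algebra that is trivially correctable from $A$ under Definition~\ref{def:wrong}, yet $V^\dagger(I\otimes(-Z+I)\otimes I\otimes Z)V = Z \in \mathcal{N}_{\bar A}$ fails to commute with $X\in\M$, so $\M$ is not private from $\bar A$.

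Your diagnosis of where the argument must break is essentially the right one, and the counterexample confirms it in a strong form: not only does the bicommutant identification $\mathcal{N}_A = \mathcal{N}_{\bar A}'$ fail, but even your ``easy'' direction $\mathcal{N}_{\bar A}\subseteq\mathcal{N}_A'$ fails (here $\mathcal{N}_A'=\mathrm{span}\{I\}$), precisely because the projector $P=VV^\dagger$ sandwiched in $V^\dagger(O_A\otimes I_{\bar A})P(I_A\otimes O_{\bar A})V$ need not commute with either tensor factor. Your proposed repair also matches the paper's: the true statement (Theorem~\ref{thm:real}, from \cite{kribs2019quantumcomplementarityoperatorstructures}) is that an algebra is correctable for a channel $\E$ if and only if it is private for the \emph{complementary} channel $\E^C$, which is not in general the erasure channel $\bar{\E}$ of the complementary region; a correctable-versus-private equivalence of the intended kind is recovered only after imposing the algebraic Knill--Laflamme condition $[PE_k^\dagger E_l P, X]=0$ of Definition~\ref{def:corr}, which is exactly the extra hypothesis added to the corrected uniqueness theorem. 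So the right conclusion to draw from your attempt is not that a cleverer argument is needed, but that the lemma as stated should be abandoned.
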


Central to the existence and uniqueness theorem is the idea of complementary recovery. Because an incorrect definition of correctable is used in \cite{Pollack_2022}, complementary recovery is also incorrectly defined. The $\textbf{incorrect}$ definition is as follows:
 
  \begin{definition}\label{def:wrongcom} [\cite{Pollack_2022}, Definition 4.2] A code with encoding isometry $V:\H_L\to \H$, a subregion of the physical Hilbert space $A$ and a von Neumann algebra $\M \subseteq \mathcal{L}(\H_L)$, together $(V,A,\M)$, exhibit \textbf{complementary recovery} if:
    \begin{itemize}
        \item $\M$ is correctable from $A$ with respect to $V$: $\M \subseteq V^\dagger (\mathcal{L}(\H_A)\otimes I_{\bar A})V$,
        \item $\M'$ is correctable from $\bar A$ with respect to $V$: $\M' \subseteq V^\dagger (I_A \otimes \mathcal{L}(\H_{\bar A}))V$.
    \end{itemize}
\end{definition}

Finally, we give the original theorem as it is stated in \cite{Pollack_2022}:

\begin{theorem}\label{thm:whatalgebra} [\cite{Pollack_2022}, Theorem 4.4] \textbf{Uniqueness of the von Neumann algebra.} Say $V$ is an encoding isometry and say $A$ is a subregion. Let $\M := V^\dagger (\mathcal{L}(\H_A)\otimes I_{\bar A})V$ be the image of operators on $\H_A$ projected onto $\H_L$. If $\M$ is a von Neumann algebra (that is, it is closed under multiplication), then it is the unique von Neumann algebra satisfying complementary recovery with $V$ and $A$. If it is not, then no von Neumann algebra satisfying complementary recovery exists.
\end{theorem}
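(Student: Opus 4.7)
The plan is to use Lemma~\ref{lemma:corrpriv} (correctable from $A$ $\leftrightarrow$ private from $\bar A$) as the workhorse, converting each ``correctable'' condition into a commutant inclusion, and then combining them via the bicommutant theorem to pin $\M$ down on both sides of an inclusion. Write $\M_{\max} := V^\dagger(\mathcal{L}(\H_A)\otimes I_{\bar A})V$ for brevity throughout.

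First I would dispatch existence by checking that $\M_{\max}$ itself exhibits complementary recovery in the sense of Definition~\ref{def:wrongcom}. The ``$\M_{\max}$ correctable from $A$'' bullet is tautological. For the second bullet, I would apply (the $A\leftrightarrow\bar A$ swap of) Lemma~\ref{lemma:corrpriv} to $\M_{\max}'$: it is correctable from $\bar A$ iff it is private from $A$, i.e.\ iff $V^\dagger(\mathcal{L}(\H_A)\otimes I_{\bar A})V \subseteq (\M_{\max}')' = \M_{\max}$, where the final equality uses the bicommutant theorem, available because $\M_{\max}$ is a von Neumann algebra by hypothesis. That inclusion is trivial.

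For uniqueness, let $\M$ be any von Neumann algebra satisfying complementary recovery with $(V,A)$. The ``$\M$ correctable from $A$'' bullet immediately gives $\M \subseteq \M_{\max}$. The ``$\M'$ correctable from $\bar A$'' bullet, run through Lemma~\ref{lemma:corrpriv}, says $\M'$ is private from $A$, which unpacks to $\M_{\max} \subseteq (\M')' = \M$ by a second application of the bicommutant theorem (now to $\M$ itself). The two inclusions sandwich $\M = \M_{\max}$. For the contrapositive: if $\M_{\max}$ is not a von Neumann algebra, then the same sandwich forces any candidate $\M$ to equal $\M_{\max}$, contradicting the standing assumption that $\M$ be a von Neumann algebra, so no such $\M$ exists.

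The main obstacle I anticipate is Lemma~\ref{lemma:corrpriv} itself. The whole argument hinges on the biconditional ``correctable from $A$ iff private from $\bar A$'' at the level of the containment-based Definition~\ref{def:wrong}. The forward direction is essentially a restatement of an algebra-commutant inclusion, but the reverse direction --- upgrading a mere containment of $V^\dagger(I_A\otimes \mathcal{L}(\H_{\bar A}))V$ in $\M'$ into a bona fide correctability statement about $\M$ --- is where genuine error-correction content (such as the algebraic Knill--Laflamme condition $[P E_k^\dagger E_l P, X]=0$ of Definition~\ref{def:corr}) ought to enter, and cannot be recovered from containments alone. Since the paper has already flagged Definition~\ref{def:wrong} as too weak, I expect this biconditional to be the precise hinge that breaks, consistent with the counterexample the paper is about to exhibit and with the need for the stronger correctability hypothesis in the authors' replacement theorem.
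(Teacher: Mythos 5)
Your reconstruction is faithful to the original argument in \cite{Pollack_2022}, but it proves a statement that this paper exists to refute: Theorem~\ref{thm:whatalgebra} is false as stated, and the paper's treatment of it is a counterexample rather than a proof. The gap is exactly where you suspected, but it is worth being precise about which direction of Lemma~\ref{lemma:corrpriv} kills the argument, because it is not the direction you single out. You worry about upgrading ``private'' to ``correctable''; in fact it is the forward direction --- containment $\M \subseteq V^\dagger(\mathcal{L}(\H_A)\otimes I_{\bar A})V$ implying the commutation statement $V^\dagger(I_A\otimes\mathcal{L}(\H_{\bar A}))V\subseteq\M'$ --- that fails, and your uniqueness step invokes precisely this direction (with $A$ and $\bar A$ swapped) when you pass from ``$\M'$ correctable from $\bar A$'' to ``$\M'$ private from $A$'' to conclude $\M_{\max}\subseteq(\M')'=\M$. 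In the paper's counterexample (the non-adjacent bipartition $A_{1,3}$ of the four-qubit code), take $\M=\mathcal{N}=\mathrm{span}\{Z,I\}$: then $\mathcal{N}'=\mathrm{span}\{Z,I\}$ equals $V^\dagger(I_A\otimes\mathcal{L}(\H_{\bar A}))V$ and so is correctable from $\bar A_{2,4}$ under Definition~\ref{def:wrong}, yet $\M_{\max}=\mathrm{span}\{X,Y,Z,I\}\not\subseteq\mathcal{N}$, since for instance $V^\dagger(I\otimes(-Z+I)\otimes I\otimes Z)V=Z$ fails to commute with $X\in\M_{\max}$. Both $\mathcal{N}$ and $\M_{\max}$ satisfy Definition~\ref{def:wrongcom}, so uniqueness genuinely fails; no repair of the containment-based bookkeeping can save the sandwich.

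The structural reason the lemma breaks, and the thing your proposal does not supply, is that the correct equivalence (Theorem~\ref{thm:real}) relates correctability for a channel $\E$ to privacy for its \emph{complementary} channel $\E^C$, and $\E^C$ for the erasure of $\bar A$ is not the erasure $\bar\E$ of $A$. The paper's corrected theorem therefore adds the hypothesis that $\M_{\max}$ be correctable for $\E$ in the operator-algebraic Knill--Laflamme sense of Definition~\ref{def:corr}, i.e.\ $[PE_k^\dagger E_lP,X]=0$ for all Kraus pairs, and its proof runs the sandwich through $(\bar\E^C)^\dagger$ rather than through Lemma~\ref{lemma:corrpriv}: one computes $P(\bar\E^{C})^\dagger(\mathcal{L}(\H_C))P=P(\mathcal{L}(\H_A)\otimes I_{\bar A})P$ directly from the Kraus operators of $\bar\E$, so that privacy of $\M'$ for $\bar\E^C$ is literally the statement $\M_{\max}\subseteq(\M')'$. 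That computation is the ingredient that legitimately replaces the false biconditional; your instinct that ``genuine error-correction content ought to enter'' is right, but it enters as an added hypothesis on $\M_{\max}$ and a change of channel, not as a strengthening of the containment definitions alone.
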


We will show how the incorrect definition of correctable, and therefore complementary recovery, is what gives rise to the contradiction in this theorem. 

\subsection{The contradiction}

We will use example E of \cite{Pollack_2022}, shown in Figure~\ref{fig:e}, to demonstrate the contradiction.

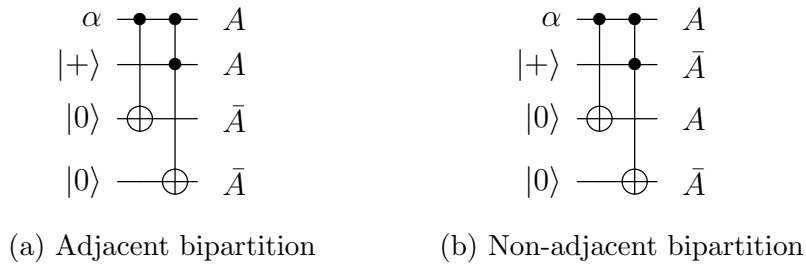
\begin{figure}[t]
\centering

\begin{subfigure}[b]{0.4\textwidth}
$$  \Qcircuit @C=0.3em @R=1.2em {
    \lstick{\alpha} & \ctrl{2} & \ctrl{1} & \qw & \rstick{A}\\
    \lstick{\ket{+}} & \qw &  \ctrl{2} & \qw  & \rstick{A}\\
    \lstick{\ket{0}} & \targ & \qw  & \qw & \rstick{\bar A}\\
    \lstick{\ket{0}}  & \qw & \targ & \qw &  \rstick{\bar A}
}$$
    \caption{Adjacent bipartition}
\end{subfigure}
\begin{subfigure}[b]{0.4\textwidth}
$$  \Qcircuit @C=0.3em @R=1.2em {
    \lstick{\alpha} & \ctrl{2} & \ctrl{1} & \qw &  \rstick{A}\\
    \lstick{\ket{+}} & \qw &  \ctrl{2} & \qw &  \rstick{\bar A} \\
    \lstick{\ket{0}} & \targ & \qw  & \qw &  \rstick{A}\\
    \lstick{\ket{0}}  & \qw & \targ & \qw &  \rstick{\bar A}
}$$
    \caption{Non-adjacent bipartition}
\end{subfigure}
\caption{Example code from \cite{Pollack_2022}}
\label{fig:e}
\end{figure}

We are given the isometry
\begin{align}
V=\frac{1}{\sqrt{2}}(\ket{0000}\bra{0}+\ket{0100}\bra{0}+\ket{1010}\bra{1}+\ket{1111}\bra{1}),
\end{align}
which takes the one qubit logical state $\alpha = a\ket{0}+b\ket{1}$ to a state in the 4 qubit physical space $\mathcal{H} = \mathcal{H}_A \otimes \mathcal{H}_{\bar{A}}$. There are two possible bipartitions of $\H$. The first is the adjacent bipartition where the first two qubits are in $\H_A$ and the second two qubits are in $\H_{\bar{A}}$. 

When calculating the algebra of operators on the logical state for this subregion $A_{1,2}$ using the definition $\mathcal{M} = V^{\dagger}(\mathcal{L}(\mathcal{H}_A)\otimes I_{\bar{A}})V$ from Theorem  \ref{thm:whatalgebra}, we find\footnote{Although some of the calculations that follow are simple enough to be performed by hand, we checked all statements about generators for algebras, etc., using Mathematica code which we have made available on Github:https://github.com/juliaj25/CR-for-Holographic-Codes.git} 
  \begin{align}
      \mathcal{M}_1= \text{span}\{Z, I\}.
  \end{align}
  Since $\M \subseteq \mathcal{L}(\H_L)$ is a subspace, we will write these algebras in terms of the operators that linearly span this subspace.
  
 The second choice of bipartition has nonadjacent subregions where the first and third qubit are in $\H_A$ and the second and fourth qubit are in $\H_{\bar{A}}.$ Calculating the algebra of operators for this subregion $A_{1,3}$, we find 
 \begin{align}
    \mathcal{M}_2= \text{span}\{X,Y,Z,I\}.
  \end{align}

The commutants of these algebras are 
\begin{align}
    \M'_1=\text{span}\{Z,I\}\\
    \M'_2=\text{span}\{I\}
\end{align}

$\M'_1$ is equal to $V^{\dagger}(I_{A}\otimes \mathcal{L}(\mathcal{H}_{\bar{A}}) )V$ for $\bar A_{3,4}$. However, $\M'_2$ is not equal to $V^{\dagger}(I_{A}\otimes \mathcal{L}(\mathcal{H}_{\bar{A}}) )V=\text{span}\{Z,I\}$ for $\bar A_{2,4}$. This is the first indication of a contradiction.  However, $\M_1$ and $\M_2$ are both von Neumann algebras and according to Theorem~\ref{thm:whatalgebra} complementary recovery should be satisfied. 


We now check that each of these algebras satisfy the definition of correctable given above. For $\mathcal{M}_1$, there does exist an $ O_A \in \mathcal{L}(\mathcal{H}_A)$  such that for any $ O_L \in \mathcal{M}$, $ O_L = V^{\dagger}(O_A \otimes I_{\bar{A}})V$. Figure~\ref{fig:calc} demonstrates this calculation.

First applying $V$ to the logical state $\ket{\psi_L} =\ket{\alpha}$, we get the 4 qubit physical state 
\begin{align}
   V\ket{\alpha} =\ket{\psi_\H}= \frac{a}{\sqrt{2}}\ket{0000} +\frac{a}{\sqrt{2}}\ket{0100} +\frac{b}{\sqrt{2}}\ket{1010} + \frac{b}{\sqrt{2}}\ket{1111} .
\end{align}
Then we act $O_A \otimes I_{\bar A}= Z_1 \otimes I_2 \otimes I_{\bar A}$, where the subscripts indicate the qubit the operator is acting on:
\begin{align}
   O_A\ket{\psi_\H} = \frac{a}{\sqrt{2}}\ket{0000} +\frac{a}{\sqrt{2}}\ket{0100} -\frac{b}{\sqrt{2}}\ket{1010} - \frac{b}{\sqrt{2}}\ket{1111}  .
\end{align}
Finally, we act $V^\dagger$ (sending the state backwards through the circuit) to get the output logical state 
\begin{align}
   V^\dagger O_A\ket{\psi_\H} = \frac{a}{2}\ket{0} +\frac{a}{2}\ket{0} -\frac{b}{2}\ket{1} - \frac{b}{2}\ket{1} = Z\ket{\alpha}.
\end{align}
Thus, there exists an $O_A$ such that  $O_L=V^{\dagger}(O_A \otimes I_{\bar{A}})V=Z$. The same can be done with $O_A = I_1 \otimes I_2$ to get $O_L = I$. Since the operators on the logical state are any linear combination of $\{Z,I\}$, then we know any logical operator is correctable from $A_{1,2}$.

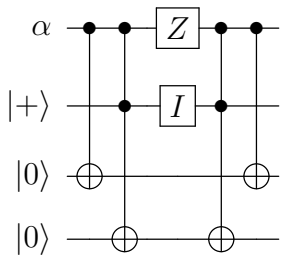
\begin{figure}[t]
\centering

$$  \Qcircuit @C=0.3em @R=1.2em {
   \lstick{\alpha} & \ctrl{2} & \ctrl{1} & \qw & \gate{Z} & \ctrl{1} & \ctrl{2} & \qw \\
    \lstick{\ket{+}} & \qw &  \ctrl{2} & \qw & \gate{I} & \ctrl{2} & \qw & \qw \\
    \lstick{\ket{0}} & \targ & \qw  & \qw & \qw & \qw & \targ & \qw \\
    \lstick{\ket{0}}  & \qw & \targ & \qw &  \qw & \targ & \qw & \qw
}$$

\caption{Applying V, then $O_A \otimes I_{\bar A} = Z_1 \otimes I_2 \otimes I_{\bar A}$, then applying $V^\dagger$}
\label{fig:calc}

\end{figure}

Using the same procedure for $\mathcal{M}_2$, we find there exists an $ O_A \in \mathcal{L}(\mathcal{H}_A)$  such that for any $ O_L \in \mathcal{M}_2$, $ O_L = V^{\dagger}(O_A \otimes I_{\bar{A}})V$. We find if $O_L = Z, \text{ then } O_A = Z_1 \otimes I_3$
  and when $O_L = I$, $O_A = I_1 \otimes I_3$. In addition, when $O_L = X$, $O_A = 2X_1 \otimes X_3$ and when $O_L = Y$, $O_A = 2Y_1 \otimes X_3$. Thus, $\mathcal{M}_2$ is correctable from $A_{1,3}$ with respect to $V$.

  We now check if the lemma defined above holds by checking if these algebras are private from their respective $\bar{A}$ subregions. 

  Our calculations show\footnote{To check this, we exhaustively checked the commutators that appear in \eqref{eq:correct_correctable} for each error operator of the erasure channel for a complete basis of operators in the algebra. Our Mathematica code is available on Github:https://github.com/juliaj25/CR-for-Holographic-Codes.git} that for every $O_{\bar{A}} \in \mathcal{L}(\mathcal{H}_{\bar{A}})$, $ V^{\dagger}(I_{A}\otimes O_{\bar{A}} )V$ commutes with every element of $\M_1$.  However, for every $O_{\bar{A}} \in \mathcal{L}(\mathcal{H}_{\bar{A}})$ it is not the case that $ V^{\dagger}(I_{A}\otimes O_{\bar{A}} )V$ commutes with every element of $\M_2$. For example,
  \begin{align}
      V^{\dagger}(I \otimes (-Z+I) \otimes I \otimes Z)V = Z
  \end{align}
  does not commute with $X\in \M_2$. Thus, $\M_2$ is not private from $\bar{A}_{2,4}$ and the lemma is contradicted.  

Additionally, we have a contradiction with the uniqueness statement in Theorem~\ref{thm:whatalgebra}.  Using the same process as above, we found that $\mathcal{M'}_1$ and $\mathcal{M'}_2$ are correctable from their respective $\bar{A}$ subregions and complementary recovery should be satisfied for both cases. However, let $\mathcal{N}=\text{span}\{Z, I\} $ a subset of $\M_2= \text{span}\{X,Y,Z,I\}$. $\mathcal{N}$ is itself a von Neumann algebra and has the commutant $\mathcal{N}'= \text{span}\{Z,I\}$. We have confirmed that $\text{span}\{Z,I\}$ is correctable from both $A_{1,3}$ and $\bar A_{2,4}$, thus complementary recovery is satisfied. However, we also confirmed $\M_2$ and $\M'_2$ are correctable from $A_{1,3}$ and $\bar A_{2,4}$ and complementary recovery must also be satisfied. Since $\M_2$ is a von Neumann algebra, this contradicts the statement that any von Neumann algebra satisfying complementary recovery is unique.

  \subsection{Modified theorem}

  In order to remedy this contradiction and restate the theorem of existence and uniqueness, we must apply the original operator-algebraic definitions of correctability and privacy given in \cite{kribs2019quantumcomplementarityoperatorstructures} to our specific erasure channel $\mathcal{E} : \mathcal{L}(\mathcal{H}) \rightarrow \mathcal{L}(\mathcal{H})$, discussed in section~\ref{sec:background}.

We must also use a theorem in \cite{kribs2019quantumcomplementarityoperatorstructures} that the incorrect Lemma~\ref{lemma:corrpriv} was adapted from:

\begin{theorem}\label{thm:real}
    [\cite{kribs2019quantumcomplementarityoperatorstructures}, Proposition 2.4] Let $\M$ be a subalgebra of $\mathcal{L}(P\H)$, for some Hilbert space $\H$ and projection $P$. Let $\E$ be a channel on $\H$ with a complementary channel $\E^C$. Then $\M$ is correctable for $\E$ with respect to $P$ if and only if $\M$ is private for $\E^C$ with respect to $P$.
\end{theorem}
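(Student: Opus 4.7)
The plan is to show that both the correctable condition and the private-for-$\mathcal{E}^C$ condition reduce to the same statement: that every operator of the form $PE_k^\dagger E_l P$ lies in the commutant $\M'$. First, I would observe that the testable correctability condition \eqref{eq:correct_correctable} already phrases correctability of $\M$ for $\E$ as $[PE_k^\dagger E_l P, X]=0$ for all $X \in \M$ and all pairs of Kraus operators of $\E$. By definition of the commutant, this is exactly the statement that $PE_k^\dagger E_l P \in \M'$ for every $k,l$.

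Next, I would compute the dual channel $(\mathcal{E}^C)^\dagger : \mathcal{L}(\H_C) \to \mathcal{L}(\H)$ explicitly. Starting from the formula $\mathcal{E}^C(\rho) = \sum_{k,l}\text{Tr}(\rho E_k^\dagger E_l)\ket{k}\bra{l}$ and taking an arbitrary $Y = \sum_{k,l} y_{kl}\ket{k}\bra{l} \in \mathcal{L}(\H_C)$, the trace duality identity $\text{Tr}(\mathcal{E}^C(\rho)Y) = \text{Tr}(\rho (\mathcal{E}^C)^\dagger(Y))$ yields $(\mathcal{E}^C)^\dagger(Y) = \sum_{k,l} y_{kl} E_k^\dagger E_l$. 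Sandwiching with $P$ and letting $Y$ range over all of $\mathcal{L}(\H_C)$, the image $P(\mathcal{E}^C)^\dagger(\mathcal{L}(\H_C))P$ is precisely the linear span of the family $\{PE_k^\dagger E_l P\}_{k,l}$.

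With both sides reduced to the same commutant-containment condition, the equivalence in Theorem~\ref{thm:real} follows by tracing the definitions: privacy of $\M$ for $\mathcal{E}^C$ with respect to $P$ is the inclusion of this span in $\M'$, which, being a linear condition, is equivalent to each generator $PE_k^\dagger E_l P$ lying in $\M'$, i.e.\ the correctability condition. The one subtlety I would check is that neither condition depends on the choice of Kraus representation: a different Kraus family $\{F_a\} = \{\sum_k u_{ak} E_k\}$ (related by an isometry $u$ on the environment) only changes the basis spanning $\{PE_k^\dagger E_l P\}$, leaving the span and hence both conditions invariant. I do not anticipate any serious obstacle — the content of the proof is really just careful unpacking of the definitions of $\mathcal{E}^C$ and its dual.
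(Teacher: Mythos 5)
Your argument is correct, but note that the paper itself offers no proof of this statement: it is imported verbatim as Proposition~2.4 of \cite{kribs2019quantumcomplementarityoperatorstructures} and used as a black box. Relative to the definitions as this paper states them, your derivation is exactly right and essentially complete: correctability is \emph{defined} here by the testable condition \eqref{eq:correct_correctable}, which says each $PE_k^\dagger E_lP$ lies in $\M'$, and your computation of the dual complementary channel shows $P(\E^C)^\dagger(\mathcal{L}(\H_C))P = \mathrm{span}_{k,l}\{PE_k^\dagger E_lP\}$, so privacy for $\E^C$ (containment of that span in $\M'$) is manifestly the same linear condition; your closing remark about invariance under a change of Kraus representation is also the right thing to check and is handled correctly. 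The one caveat worth stating explicitly is that in the original reference correctability is defined operationally (existence of a recovery channel acting as the identity on $\M$), and the testable condition is itself a nontrivial theorem there (their Theorem~2.3); so your proof establishes the proposition only modulo that equivalence, which this paper has quietly absorbed into its Definition~\ref{def:corr}. A fully self-contained proof from the operational definition would need that additional step, but as a derivation from the paper's stated definitions your argument stands.
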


When calculating privacy for $\E^C$ we will need the dual of the complementary channel:
\begin{align}\label{eq:compch}
    (\mathcal{E}^{C })^\dagger(\rho) \equiv \sum_{k,l} \text{Tr}(\ket{l}\bra{k} \rho) E_k^* E_l,
\end{align}
  where $\ket{k}$ and $\ket{l}$ are the basis states of the environment space $\H_C$. In our case, $\H_C$ is a four-qubit Hilbert space, since $\E$ has 16 Kraus operators. 
The problem is that Lemma~\ref{lemma:corrpriv} does not follow from Theorem~\ref{thm:real}, because $\E^C$ is not necessarily equal to $\bar \E$, the channel that erases $A$ and tensors on the identity in $A$.

With this in mind, we must now redefine complementary recovery in terms of these quantum channels:

\begin{definition}
    
  If $P$ is a projection, $A$ is a subregion inducing the factorization $\mathcal{H} = \mathcal{H}_A \otimes\mathcal{H}_{\bar{A}} $, and $\M$ is a von Neumann algebra, then $(P, A, \M)$ exhibit complementary recovery if 
  
\begin{itemize}

    \item $\M \text{ is correctable from } \mathcal{E} \text{ with respect to } P$ where $\E: \mathcal{L(H)} \rightarrow \mathcal{L(H)}$ is the channel that erases $\bar{A}$ and tensors on $I_{\bar{A}}$,

       \item $\M' \text{ is correctable from } \bar{\mathcal{E}} \text{ with respect to } P$ where $\bar{\E}: \mathcal{L(H)} \rightarrow \mathcal{L(H)}$ is the channel that erases $A$ and tensors on $I_{A}$.
    
\end{itemize}
\end{definition}

This is a stronger definition of complementary recovery than Definition~\ref{def:wrongcom} because the definition of correctable in Equation (\ref{eq:correct_correctable}) must hold for each combination of Kraus operators. The logical operators need to be recoverable from $A$ for each possible action of the error channel, not just the full erasure of $\bar A$. This is why $\M_2$ is not, in fact, correctable in the operator-algebraic sense despite satisfying Definition~\ref{def:wrong}.

We will now restate and prove Theorem~\ref{thm:whatalgebra} using these new definitions and modified conditions.

\begin{theorem} \textbf{Uniqueness of the von Neumann algebra.} 
  Say $P$ is a projection and $A$ is a subregion. Let $\M:=P(\mathcal{L}(\H_A) \otimes I_{\bar A})P$ be the image of operators projected onto $P\H$. If $\M$ is a von Neumann algebra (that is, it is closed under multiplication) and is correctable from $\E$,
then it is the unique von Neumann algebra satisfying complementary recovery with $P$ and $A$. If it is not, then no von Neumann algebra satisfying complementary recovery exists.
\end{theorem}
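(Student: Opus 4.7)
The plan is to establish three things in turn: (i) that $\M$ itself satisfies the strengthened complementary recovery, (ii) that any other von Neumann algebra $\mathcal{N}$ satisfying CR must coincide with $\M$, and (iii) that failure of either hypothesis on $\M$ rules out the existence of any such $\mathcal{N}$. The main computational input throughout is the explicit form of the Kraus operators of $\mathcal{E}$ and $\bar{\mathcal{E}}$ from Section~\ref{sec:background}: up to normalization and a Kronecker $\delta$ from orthogonality of the environment basis, the products $PE_k^\dagger E_l P$ span $\mathcal{A}:=P(I_A\otimes\mathcal{L}(\H_{\bar A}))P$, while the products $P\bar E_k^\dagger \bar E_l P$ span $\M=P(\mathcal{L}(\H_A)\otimes I_{\bar A})P$. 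For (i), correctability of $\M$ from $\mathcal{E}$ is part of the hypothesis, and correctability of $\M'$ from $\bar{\mathcal{E}}$ is automatic, since every $P\bar E_k^\dagger \bar E_l P$ already lies in $\M$ and hence commutes with every element of $\M'$ by the definition of the commutant.

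For uniqueness, let $\mathcal{N}$ be any von Neumann algebra such that $(P,A,\mathcal{N})$ exhibits CR. Correctability of $\mathcal{N}$ from $\mathcal{E}$ forces the spanning set of $\mathcal{A}$ to commute with all of $\mathcal{N}$, giving $\mathcal{A}\subseteq\mathcal{N}'$; correctability of $\mathcal{N}'$ from $\bar{\mathcal{E}}$ forces the spanning set of $\M$ to commute with $\mathcal{N}'$, so $\M\subseteq(\mathcal{N}')'=\mathcal{N}''=\mathcal{N}$ by the bicommutant theorem applied to the von Neumann algebra $\mathcal{N}$. Taking the commutant of the first inclusion yields $\mathcal{N}=\mathcal{N}''\subseteq\mathcal{A}'$, so I obtain the sandwich $\M\subseteq\mathcal{N}\subseteq\mathcal{A}'$, and uniqueness reduces to proving the equality $\mathcal{A}'=\M$.

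This equality is where I expect the main obstacle to lie. The inclusion $\M\subseteq\mathcal{A}'$ is immediate from the correctability hypothesis on $\M$. For the reverse $\mathcal{A}'\subseteq\M$, the strategy is to apply the classification theorem to the von Neumann algebra $\M$, obtaining $P\H=\bigoplus_\alpha \H_{A_\alpha}\otimes\H_{\bar A_\alpha}$ with $\M=\bigoplus_\alpha\mathcal{L}(\H_{A_\alpha})\otimes I_{\bar A_\alpha}$ and $\M'=\bigoplus_\alpha I_{A_\alpha}\otimes\mathcal{L}(\H_{\bar A_\alpha})$, and then to argue that this abstract decomposition is realized by isometries $V_{A_\alpha}:\H_{A_\alpha}\to\H_A$ and $V_{\bar A_\alpha}:\H_{\bar A_\alpha}\to\H_{\bar A}$ whose tensor-product images in $\H_A\otimes\H_{\bar A}$ are mutually orthogonal across $\alpha$. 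Granted that, $P(I_A\otimes O_{\bar A})P=\bigoplus_\alpha I_{A_\alpha}\otimes V_{\bar A_\alpha}^\dagger O_{\bar A} V_{\bar A_\alpha}$, and one realizes any target $\bigoplus_\alpha I_{A_\alpha}\otimes Y_\alpha\in\M'$ by choosing $O_{\bar A}$ block-diagonally; so $\mathcal{A}=\M'$ as sets, and $\mathcal{A}'=\M''=\M$ by the bicommutant theorem. The compatibility of the Wedderburn decomposition of $\M$ with the physical tensor factorization is essentially the content of Harlow's structural theorem, and is precisely what the algebraic Knill-Laflamme conditions for $\mathcal{E}$ enforce; without them the example of Section~\ref{sec:cr} shows that $\mathcal{A}\subsetneq \M'$ is possible.

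For non-existence, I proceed by contradiction. If some von Neumann algebra $\mathcal{N}$ satisfies CR, the argument above still yields $\M\subseteq\mathcal{N}$ and $\mathcal{A}\subseteq\mathcal{N}'$; taking the commutant of the first gives $\mathcal{N}'\subseteq\M'$, so $\mathcal{A}\subseteq\mathcal{N}'\subseteq\M'$, which is exactly the statement that $\M$ is correctable from $\mathcal{E}$. Applying the uniqueness conclusion then forces $\mathcal{N}=\M$, and since $\mathcal{N}$ is a von Neumann algebra so is $\M$. By contrapositive, if $\M$ fails to be either a von Neumann algebra or correctable from $\mathcal{E}$, no $\mathcal{N}$ can satisfy complementary recovery with $P$ and $A$.
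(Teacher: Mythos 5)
Your proposal contains the paper's proof as its core and is correct where the two overlap, but it is organized differently and attempts strictly more. For existence, you verify the testable condition \eqref{eq:correct_correctable} directly: the products $P\bar E_k^\dagger \bar E_l P$ span $\M$ and therefore commute with $\M'$ by the definition of the commutant. The paper reaches the same conclusion by a detour through privacy of $\M'$ for the complementary channel $\bar{\E}^C$ and Theorem~\ref{thm:real}, computing $P(\bar{\E}^C)^\dagger(\mathcal{L}(\H_C))P = P(\mathcal{L}(\H_A)\otimes I_{\bar A})P$; your route is the same computation without the detour. For uniqueness, your inclusion $\M\subseteq(\mathcal{N}')'=\mathcal{N}$, obtained from correctability of $\mathcal{N}'$ for $\bar{\E}$ plus the bicommutant theorem, is exactly the paper's key step. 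The difference is that the paper restricts a priori to candidates $\mathcal{N}\subsetneq\M$, for which $\M\subseteq\mathcal{N}$ is already a contradiction, whereas you allow arbitrary $\mathcal{N}$ and must therefore also exclude $\M\subsetneq\mathcal{N}\subseteq\mathcal{A}'$.

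That extra case is where your proposal is incomplete, and the gap is real rather than cosmetic: the equality $\mathcal{A}'=\M$ is not optional for the theorem as stated, because under the stated hypotheses $\mathcal{N}=\mathcal{A}'$ is always a von Neumann algebra satisfying both conditions of complementary recovery (one has $\mathcal{A}\subseteq\mathcal{A}''=\mathcal{N}'$ and $\M\subseteq\mathcal{A}'=\mathcal{N}$ from correctability of $\M$), so if $\mathcal{A}'\supsetneq\M$ uniqueness simply fails. Your sketch of $\mathcal{A}'\subseteq\M$ via the compatibility of the Wedderburn decomposition of $\M$ with the physical factorization is the right idea --- it is the structure theorem of \cite{Harlow_2017}, whose hypotheses (the algebraic Knill--Laflamme conditions for both $\E$ acting on $\M$ and $\bar{\E}$ acting on $\M'$) you have available after the existence step --- but as written it is an appeal to that theorem rather than a proof, and the claim that the blocks are realized by mutually orthogonal tensor-product isometries is precisely the nontrivial content. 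The non-existence paragraph has a related wrinkle: you ``apply the uniqueness conclusion'' whose hypotheses include $\M$ being a von Neumann algebra, which is part of what you are trying to deduce; this is repaired by the same lemma, since once $\mathcal{A}'\subseteq\M$ is known (its proof needs only correctability of $\M$, which you correctly derive from the existence of $\mathcal{N}$), $\M=\mathcal{A}'$ is a commutant and hence automatically a von Neumann algebra. For comparison, the paper's own proof never confronts either issue --- it restricts uniqueness to subalgebras of $\M$ and does not argue the non-existence clause --- so your argument, once the lemma $\mathcal{A}'=\M$ is supplied, establishes more than the printed proof does.
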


\begin{proof}

Let $\M:=P (\mathcal{L}(\mathcal{H}_A)\otimes I_{\bar{A}})P $ be the image of operators on $\mathcal{H}_A$ projected onto $P\mathcal{H}$ that is correctable for the channel $\mathcal{E}: \mathcal{L(H)} \rightarrow \mathcal{L(H)} $ with respect to $P$. To first prove complementary recovery, we assume $\M$ is a von Neumann algebra. The first condition of complementary recovery holds by assumption. Then, by the definition of $\M$, every element of its commutant $\M'$ must commute with $P (\mathcal{L}(\mathcal{H}_A)\otimes I_{\bar{A}})P $. Thus $\M'$ is private from $\bar{\E}^C$ which we show below. 

Then, the Kraus operators of $\bar{\E}$ are 

\begin{align}
    \bar E_{i,j}= \frac{1}{\sqrt{\mathcal{H}_{A}}}( \ket{i}_{A} \otimes I_{\bar{A}})( \bra{j}_{A} \otimes I_{\bar{A}}).
\end{align}

By the definition of privacy, we have

\begin{align}
    P (\bar{\E}^{C})^\dagger (\mathcal{L}(\mathcal{H}_C))P \subseteq S = \{ X \in \mathcal{L}(P\H), [X, \M']=0\}.
\end{align}

Using the definition of the dual complementary channel defined in equation~\ref{eq:compch}, we see this can be written as 

\begin{align}
    P ( \sum_{k,l} (\text{Tr}(\ket{l}\bra{k} \rho)  \bar E_k^* \bar E_l) P = P (\sum_{k,l} \frac{\rho_{k,l}}{|\H_A|}\ket{k}\bra{l}\otimes I_{\bar{A}}) P 
  =  P(\mathcal{L}(\mathcal{H}_A)\otimes I_{\bar{A}})P \subseteq S.
\end{align}

This says $P(\mathcal{L}(\mathcal{H}_A)\otimes I_{\bar{A}})P$ must commute with $\M'$, and by definition of $\M$, this is true. Thus $\M'$ is private for $\bar{\E}^C$. Then by Theorem~\ref{thm:real}, $\M'$ is correctable for $\bar{\E}$ and complementary recovery is satisfied.

Now, to prove uniqueness, let $\mathcal{N} \subsetneq  P (\mathcal{L}(\mathcal{H}_A)\otimes I_{\bar{A}})P $ be any von Neumann algebra that is correctable from $\E$ but not equal to the full set of correctable operators. We assume $(P, A, \mathcal{N})$ obeys complementary recovery and derive a contradiction. By the second condition of complementary recovery, $\mathcal{N'}$ is correctable from $\bar{\E}$ with respect to $P$ and thus is private from $\bar{\E}^C$ with respect to $P$. As shown above, this is 
\begin{align}
    P (\bar{\E}^{C})^\dagger (\mathcal{L}(\mathcal{H}_C))P =  P (\mathcal{L}(\mathcal{H}_A)\otimes I_{\bar{A}})P \subseteq S,
\end{align}
where $S$ is the set of operators that commute with $\mathcal{N}'$. Using the bicommutant theorem, $S=\mathcal{N}''=\mathcal{N}$. We then have 
\begin{align}
    \mathcal{N} \subsetneq  P (\mathcal{L}(\mathcal{H}_A)\otimes I_{\bar{A}})P \subseteq \mathcal{N},
\end{align}
 which is a contradiction.

\end{proof}

The contradiction to the uniqueness result from above is resolved with this revised theorem. Consider again $\mathcal{N} =\text{span}\{Z,I\}$, the subset of $\M_2=\text{span}\{X,Y,Z,I\}$ with commutant $\mathcal{N}' =\text{span}\{Z,I\}$. If complementary recovery is satisfied, $\mathcal{N}$ should be correctable from $\E$ and therefore private from $\E^C$. However, using the definitions of private and complementary channel we find $V^\dagger (\bar{\E}^{C})^\dagger (\mathcal{L}(\mathcal{H}_C))V =  V^\dagger (\mathcal{L}(\mathcal{H}_A)\otimes I_{\bar{A}})V =\text{span}\{X,Y,Z,I\} $ must commute with $\mathcal{N}$ which is not true, so $\mathcal{N}$ is not, in fact, correctable. Thus, complementary recovery is not satisfied.
\section{Other examples of complementary recovery}
\label{sec:examples}

Using the other example codes given in \cite{Pollack_2022}, we were able to analyze\footnote{To accomplish this we wrote Mathematica code, which we have made available on Github:https://github.com/juliaj25/CR-for-Holographic-Codes.git} the structure of the circuits that give rise to the contradiction presented in the section above. Additionally, we were able to find examples of algebras with nonadjacent bipartitions that are still correctable from their subregions. 

\subsection{4-qubit code}
We start with another example of a four-qubit code, this time with a 3-qubit logical state $\ket{\psi_L}= \ket{\alpha i j}$ as shown in Figure~\ref{fig:exd}. It has two choices of bipartitions, the adjacent where $A_{1,2}$ and $\bar A_{3,4}$, and the nonadjacent where $A_{1,3}$ and $\bar A_{2,4}$. We found using Definition~\ref{def:corr} that both algebras are correctable from their respective subregions:
\begin{align}
\M_1 &= \text{span}\{ZXI, ZYI, ZZI, ZII, IXI, IYI, IZI, III\} \\
\M'_1 &= \text{span}\{ZIX, ZIY, ZIZ, ZII, IIX, IIY, IIZ, III\} \\
\M_2 &= \text{span}\{XII, YII, ZII, III\} \\
\M'_2 &= \text{span}\{
IXX, IXY, IXZ, IXI, IYX, IYY, IYZ, IYI, \\
&\quad IZX, IZY, IZZ, IZI, IIX, IIY, IIZ, III \} \notag
\end{align}

For the adjacent case, the logical operations that can be seen with just access to $A$ are those with either $I$ or $Z$ acting on $\alpha$, $X,Y,Z$ or $I$ on $i$ and only $I$ on $j$. This is because $\alpha$ interacts with $\bar A$ so some operations are lost in the erasure $\bar A$, $i$ is in $A$ and has no interaction with $\bar A$, and $j$ is in $\bar A$. 

For the non-adjacent case, we can see all operations on $\alpha$ because it has no interactions with $\bar A$. However, both $i$ and $j$ are in $\bar A$ so no operations on these qubits other than the identity can be seen from $A$.

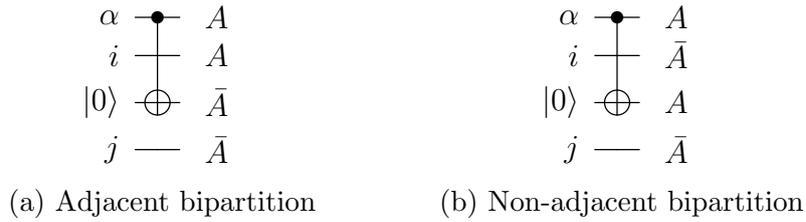
\begin{figure}[t]
\centering

\begin{subfigure}[b]{0.4\textwidth}
        $$\Qcircuit @C=0.3em @R=1.1em {
                \lstick{\alpha} & \ctrl{2} & \qw &  \rstick{A}\\
            \lstick{i} & \qw & \qw &  \rstick{A}\\
            \lstick{\ket{0}} & \targ & \qw &  \rstick{\bar A} \\
            \lstick{j}  & \qw & \qw &  \rstick{\bar A}
}$$
    \caption{Adjacent bipartition}
\end{subfigure}
\begin{subfigure}[b]{0.4\textwidth}
        $$\Qcircuit @C=0.3em @R=1.1em {
                \lstick{\alpha} & \ctrl{2} & \qw &  \rstick{A}\\
            \lstick{i} & \qw & \qw &  \rstick{\bar A}\\
            \lstick{\ket{0}} & \targ & \qw &  \rstick{A} \\
            \lstick{j}  & \qw & \qw &  \rstick{\bar A}
}$$
    \caption{Non-adjacent bipartition}
\end{subfigure}
\caption{Isometry encoding a 3-qubit logical state into a 4-qubit physical state}
\label{fig:exd}
\end{figure}

\subsection{6-qubit code}

 \begin{table}[ht]
    \centering
\begin{tabular}{ | m{5em} | m{1cm}| m{1cm} | m{2cm}|} 
\hline
     & Size of $\M$ & Size of $\M'$ & Correctable \\ 
  \hline
   $A_{1,2,3}$  (adjacent)  & 8 & 8 &yes \\ 
  \hline
    $A_{1,2,6}$ (adjacent)& 8& 8&yes \\ 
  \hline
   $A_{1,5,6}$ (adjacent)  & 8 & 8 &yes\\ 
  \hline
  $A_{1,2,4}$ & 16 & 8 &no \\ 
  \hline
  $A_{1,2,5}$  & 32& 2&yes\\ 
  \hline
  $A_{1,3,5}$ & 8 & 8 &yes\\ 
  \hline
  $A_{1,3,4}$  & 4& 32&no\\ 
  \hline 
   $A_{1,3,6}$& 2& 32 &yes\\ 
  \hline
  $A_{1,4,5}$ & 16 & 8 &no\\ 
  \hline
  $A_{1,4,6}$  & 4 & 32&no\\ 
  \hline
\end{tabular}
 \caption{All bipartitions of the 6 qubit code}
\end{table}

The six-qubit code in Figure~\ref{fig:exg} has a much larger choice of bipartitions. There are ten possible bipartitions of in which there are three physical qubits on each side: three are adjacent and seven are nonadjacent. All three adjacent bipartitions have algebras that are correctable from their respective subregions. Additionally, two of the nonadjacent bipartitions are correctable. This data is given in the table below.

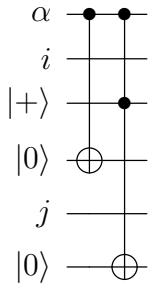
\begin{figure}
$$\Qcircuit @C=0.3em @R=1.3em {
        \lstick{\alpha}  & \ctrl{3} & \ctrl{2} & \qw \\
    \lstick{i}  & \qw & \qw & \qw  \\
    \lstick{\ket{+}} & \qw & \ctrl{3} & \qw  \\
    \lstick{\ket{0}} & \targ & \qw & \qw  \\
    \lstick{j} & \qw  & \qw  & \qw  \\
    \lstick{\ket{0}} & \qw &  \targ & \qw 
}$$
\caption{Isometry encoding a 3-qubit logical state into a 6-qubit physical state}
 \label{fig:exg}  
\end{figure}

For the adjacent bipartitions and $A_{1,3,5}$, the algebras are just those that have $Z$ or $I$ on $\alpha$,  $X,Y,Z$ or $I$ on either $i$ or $j$ depending on which qubit is in $A$, and $I$ on the qubit in $\bar A$. For $A_{1,2,5}$ and $A_{1,3,6}$, both $i$ and $j$ are in either $A$ or $\bar A$, thus the algebras are just $Z$ and $I$ on $\alpha$ and either $X,Y,Z$ or $I$ on both $i$ and $j$, or $I$ on both. The key feature of these bipartitions that makes these algebras correctable is the CNOT is in both $A$ and $\bar A$. When this is not the case, as in the other bipartitions listed, the algebra is no longer correctable. This is because when there is full access to the CNOT, this will allow the creation of $X$ and $Y$ on $\alpha$, while the Toffoli gate  will still create a $Z$ in the other subregion. Thus, we will have an $X$ in the algebra and a $Z$ in its commutant which cannot be true.

\section{Discussion}\label{sec:discussion}

The results presented in this paper better solidify our understanding of which codes exhibit complementary recovery. There are several natural next steps we hope to pursue.

On the one hand, it would be useful to understand more concrete examples of holographic codes at higher qubit number. One possibility might be to glue the explicit constructions presented in \cite{Pollack_2022} and reviewed here into holographic tensor networks \cite{pastawski2015holographic, hayden2016holographic,kohler2019toy,jahn2021holographic,Cao:2021wrb}. Another might be to design codes with more structured logical algebras (and centers). Thus far much of the work has focused exclusively on the area operator, which \cite{Harlow_2017} showed to obey an analogue of the Ryu-Takayanagi (RT) formula \cite{ryu2006holographic}. However, since holographic codes with Clifford encoding circuits have trivial area operators \cite{Cao:2023mzo,Steinberg:2023wll}, the geometric information in stabilizer tensor networks must come entirely from other operators in the center of the code subalgebra.

On the other hand, the notion of geometry provided from a holographic code, as already realized in \cite{Harlow_2017}, is limited compared to that of full holography. In particular, we'd like to have examples of codes where \emph{any} spatial bipartition of the boundary gives complementary recovery. In particular, in the case when the boundary consists of many qubits, we shouldn't need to consider only ``half-and-half'' bipartitions, where the erased region and its complement have the same Hilbert space dimension. However, the algebraic tools we used seem to rely on the fact that the error channel and its complement have the same number of Kraus operators--otherwise, it's not obvious how to interpret the complementary channel as a channel whose input and output are both density matrices in the same Hilbert space.

We also have some direct expectations from holography about how the center of the logical algebra should change as the bipartition changes. For example, if we divide the boundary into two pieces, and then interchange a small region of one piece for a small region of the other piece, we expect the RT surface of the new bipartition to consist of the union of the RT surface of the original bipartition and two much smaller pieces. That suggests that the center of the new logical algebra should be strictly larger than the old logical algebra, and include the original area operator inside it. It would be interesting to understand what features of a code are necessary to capture this behavior.

Given the understanding developed in the previous section the features of encoding circuits required for correctability, it seems there are several additional atomic holographic codes we can construct by modifying the circuits in ways which do not ruin correctability. Most straightforwardly, by swapping CNOT to CZ gates we shift the operator that can be corrected in the logical Hilbert space from $Z$ to $X$. It seems necessary to understand how to build more varied and larger examples of atomic codes to make them more realistic models of holography.

\section*{Acknowledgements}
We thank Jason Crann, Xi Dong, and David Kribs for helpful discussions.

\clearpage


\printbibliography[heading=bibintoc] 

\end{document}